\newcommand{\chapterauthor}[1]{%
	{\parindent0pt\vspace*{-5pt}%
		\linespread{1.1}\large\scshape#1%
		\par\nobreak\vspace*{35pt}}
	\@afterheading%
}
\newtheorem{lemma}{Lemma}
\newtheorem{corollary}{Corollary}
\newtheorem{proposition}{Proposition}
\newtheorem{definition}{Definition}
\begin{document}
\chapter*{Instrumental Variables with Treatment-Induced Selection: Exact Bias Results}
\begin{LARGE}Felix Elwert\end{LARGE}\\
\begin{large}University of Wisconsin--Madison\end{large}\vspace{0.3cm}\\
\begin{LARGE}Elan Segarra\end{LARGE}\\
\begin{large}University of Wisconsin--Madison\end{large}\\
\begin{quote} \begin{small}
Instrumental variables (IV) estimation suffers selection bias when the analysis conditions on the treatment. Judea Pearl's [2000:248] early graphical definition of instrumental variables explicitly prohibited conditioning on the treatment. Nonetheless, the practice remains common. 
In this paper, we derive exact analytic expressions for IV selection bias across a range of data-generating models, and for various selection-inducing procedures. 
We present four sets of results for linear models.
First, IV selection bias depends on the conditioning procedure (covariate adjustment vs. sample truncation). Second, IV selection bias due to covariate adjustment is the limiting case of IV selection bias due to sample truncation.
Third, in certain models, the IV and OLS estimators under selection bound the true causal effect in large samples.
Fourth, we characterize situations where IV remains preferred to OLS despite selection on the treatment.
These results broaden the notion of IV selection bias beyond sample truncation, replace prior simulation findings with exact analytic formulas, and enable formal sensitivity analyses.
\end{small} \end{quote}

\section{Introduction}

Instrumental variables (IV) analysis is a popular approach for identifying causal effects when the treatment is confounded by omitted variables. IV analysis rests on two main assumptions: that the instrument is associated with the treatment (``relevance"), and that the instrument is associated with the outcome only via the effect of treatment on the outcome (``exclusion"). The exclusion assumption is the sticking point of many empirical applications, because it requires theoretical justification and is testable only to a very limited degree (e.g., \citealt{BalkePearl1997}; \citealt{RichardsonRobins2010}). 

One type of exclusion violation that has recently gained attention is selection bias (e.g., \citealt{Swanson2015}; \citealt{Engberg2014}; \citealt{ErtefaieEtAl2016}; \citealt{HughesEtAl2019}; \citealt{CananEtAl2017}; \citealt{GkatzionisBurgess2018};
\citealt{MogstadtWiswall2012}). 
We say that IV analysis suffers selection bias when conditioning (rather than not conditioning) on some variable violates the exclusion assumption. One particularly important case is \emph{treatment-induced IV selection bias}: whenever treatment is confounded by unobservables, conditioning on a variable that has been affected by treatment induces bias. Judea \citet[p.~248]{Pearl2000} recognized this problem and presented the first definition of instrumental variables that outright prohibits conditioning on  variables affected by treatment. Despite Pearl's warning, however, conditioning on such ``descendants" of treatment remains common in IV analysis. 

Past research on treatment-induced IV selection bias (\citealt{Swanson2015}; \citealt{HughesEtAl2019}; \citealt{CananEtAl2017}; \citealt{GkatzionisBurgess2018}) is limited in two respects. First, it has focused on IV selection bias induced by sample truncation, which occurs when observations are excluded from the sample.\footnote{ Some studies have proposed corrections, bounds, or sensitivity analyses for IV selection bias in certain truncation scenarios (e.g., \citealt{MogstadtWiswall2012}; \citealt{Engberg2014}; \citealt{CananEtAl2017}; \citealt{VansteelandtEtAl2018}; \citealt{GkatzionisBurgess2018}; \citealt{HughesEtAl2019}). These approaches often rely on knowing the selection probability of both the observed and the truncated observations. }  This focus neglects that other conditioning procedures, such as covariate adjustment, can also induce selection bias. Second, in situations where consistent estimators are not readily available, the literature characterizes the size and sign of IV selection bias by simulation. Without analytic bias expressions, however, it is unclear which stylized facts resulting from simulation studies hold generically. 

This paper makes two main contributions. First, we derive analytic expressions for treatment-induced IV selection bias for a range of different data-generating models. Second, we compare the biases resulting from two different selection-inducing conditioning procedures: sample truncation and covariate adjustment.  
For tractability, we focus on linear models with homogeneous (constant) effects and normal errors. 

We highlight several results. First, the selection procedure matters. Within a given data-generating model, selection by truncation and selection by covariate adjustment introduce quantitatively different biases into IV analysis. Second, selection bias by adjustment is the limiting case of selection bias by truncation. Third, in certain models, the IV and OLS estimators with selection bound the true causal effect in large samples. Fourth, our analytic bias expressions characterize the models in which IV is less biased than OLS, which obtains when treatment does not exert an extreme effect on selection.

The rest of the paper proceeds as follows. Section \ref{sec:causal graphs} reviews basic facts about directed acyclic graphs for linear models. Section \ref{sec:IV} defines instrumental variables in econometric and graphical notation. Section \ref{sec:qualitative} describes conditions under which selection violates the IV exclusion assumption and defines IV estimation under selection by truncation and covariate adjustment. Section \ref{sec:quantitative} presents analytic expressions for the bias in IV and OLS estimators over a range of models with treatment-induced selection by trunction and by covariate adjustment. 
Section \ref{sec:conclusion} concludes.

\section{Causal Graphs}\label{sec:causal graphs}
The challenge of selection bias in IV analysis is transparently communicated with graphical causal models (\citealt{Pearl2009}; \citealt{MaathuisEtAl2018}).
Here, we review the basics.
A \emph{causal graph} represents the structural equations of the data-generating model.
Causal graphs consist of \emph{nodes} representing variables and \emph{directed edges} representing direct causal effects.
Causal graphs are assumed explicitly to display the observed and unobserved common causes of all variables.
By convention, causal graphs do not explicitly display the idiosyncratic shocks that affect individual variables.

\begin{figure}
	\centering
	\begin{subfigure}[b]{0.48\linewidth}
	    \centering
\def\xscale{1.2}
\def\yscale{1.2}
\tikzstyle{DAGarrow} = [-latex]
\begin{tikzpicture}[scale=1.2, transform shape]
	\node (z) at (-2*\xscale,0) {$Z$};
	\node (t) at (0,0) {$T$};
	\node (y) at (2*\xscale,0) {$Y$};
	\node (u) at (1*\xscale,1*\yscale) {$U$};
	\node (s) at (0,-1*\yscale) {$S$};

	\draw [DAGarrow] (z) -- node [above] {$\pi$} (t);
	\draw [DAGarrow] (t) -- node [above] {$\beta$} (y);
	\draw [DAGarrow] (u) -- node [above left] {$\delta_1$} (t);
	\draw [DAGarrow] (u) -- node [above right] {$\delta_2$} (y);
	\draw [DAGarrow] (t) -- node [left] {$\gamma$} (s);
\end{tikzpicture}
	    \caption{}
	    \label{fig:IVDAGBaseline}
	\end{subfigure}
	\begin{subfigure}[b]{0.48\linewidth}
	    \centering
	    \begin{align*}
            Z & = \varepsilon_Z \\
            U & = \varepsilon_U \\
            T & = \pi Z+\delta_{1}U+\varepsilon_{T} \\
            S & = \gamma T+\varepsilon_{S} \\
            Y & = \beta T+\delta_{2}U+\varepsilon_{Y}
        \end{align*}
	    \caption{}
	    \label{fig:IVLSEM}
	\end{subfigure}
	\caption{IV scenario where the selection variable is a function of treatment alone, equivalently displayed as a causal graph (\subref{fig:IVDAGBaseline}) and as a linear structural equations model (\subref{fig:IVLSEM}).} 
	\label{fig:IVBaselineModel}
\end{figure}
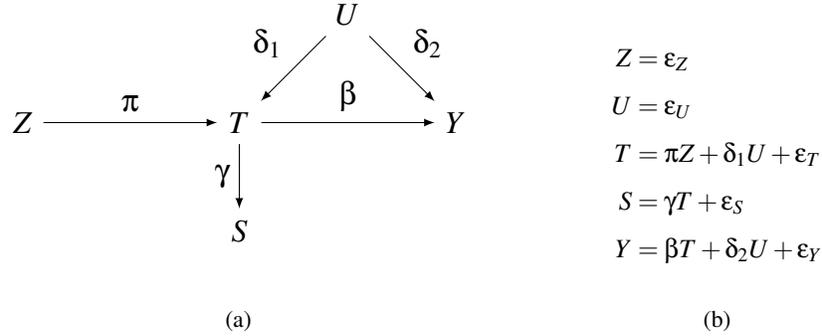

Throughout, we assume that the causal graphs represent linear  data-generating models with homogeneous effects and normally distributed errors.\footnote{Some results do not rest on the joint normality assumption, but our results on IV selection bias with truncation do.}
Without loss of generality, we further assume that all variables are standardized to have mean zero and unit variance.
The direct causal effect of one variable on another variable in such models is given by its \emph{path parameter}, which is bounded by [-1,1].
For example, the causal graph in Figure \ref{fig:IVDAGBaseline} represents the linear structural equations model given in Figure \ref{fig:IVLSEM}, with path parameters $ \pi, \beta, \gamma, \delta_1, $ and $ \delta_2 $.
For each variable $ V \in \{Z, U, T, S, Y\} $ the idiosyncratic shocks are marginally independent and normally distributed, $  \varepsilon_V\sim N(0,\sigma_V^2) $, with variance $ \sigma_V^2 $ scaled so that each $ V \sim N(0,1) $. 
Since $ U $ is unobserved, the \emph{structural error} term on $ Y $ in econometric terminology is $ \omega_Y = \delta_2U + \varepsilon_{Y} $.
Notice that $ T $ is correlated with the structural error, $ Cov(T,\omega_Y)\neq 0 $, because both depend on the unobserved confounder, $ U $. 

Under mild conditions to avoid knife-edge cases, simple rules determine the covariance structure of data generated by a model (\citealt{Pearl2009}). 
The notions of paths, collider variables, and descendants play a central role in these rules.
A \emph{path} is an acyclic sequence of adjacent arrows between two variables, regardless of the direction of the arrows. 
In a \emph{causal path} from treatment to outcome, all arrows point toward the outcome. 
In a \emph{non-causal}, or \emph{spurious}, path between treatment and outcome, at least one arrow points away from the outcome. 
A variable is called a \emph{collider} with respect to a specific path if it receives two inbound arrows on the path. For example, $ T $ is a collider on the path $ Z\rightarrow T\leftarrow U\rightarrow Y $. 
The \emph{descendant set} of a variable contains all variables directly and indirectly caused by it, e.g. $ desc(T) = \{S, Y\} $ in Figure \ref{fig:IVDAGBaseline}. 

Two variables are statistically independent if all paths between them are closed; and two variables are statistically associated if there is at least one open path between them (\citealt{VermaPearl1988}). 
A path is \emph{closed} (does not “transmit” association) if either (a) it contains a collider and neither the collider nor any of its descendants are conditioned on, or (b) it contains a non-collider that is conditioned by exact stratification. 
A path is \emph{open} (does “transmit” association) iff it is not closed (\citealt{Pearl1988}). 
Importantly, when a path contains only one collider, then conditioning on this collider, or any of its descendants, opens this path. 

The marginal covariance between two variables in a linear model with standardized variables is given by Wright's [1934] rule as the sum of the product of the path parameters on the open paths that connect the variables. 
For example, the marginal covariance between $ Z $ and $ Y $ in Figure \ref{fig:IVDAGBaseline} is $ Cov(Y,Z) = \pi\beta $, because the path $ Z\rightarrow T\rightarrow Y $ is the only open path (the other path, $ Z\rightarrow T\leftarrow U \rightarrow Y $, is closed by the unconditioned collider $ T $). 
The conditional covariance between variables $ A $ and $ B $, after adjusting for some covariate $ C $, is $ Cov(A,B|C)=Cov(A,B)-Cov(A,C) Cov(B,C) $. 
The novel bias results in this paper hinge on deriving conditional covariances when \emph{truncating} the sample as a function of $ C $.

\section{Instrumental Variables}\label{sec:IV}
Let $ T $ be the treatment variable of interest, $ Y $ be the outcome, $ Z $ be the candidate instrumental variable, and $ \mathbf{X} $ be a set of covariates. Econometrically, an instrumental variable is defined by two assumptions. 
\begin{definition}
    A variable, $ Z $, is called an instrumental variable for the causal effect of $ T $ on $ Y $, $ \beta $, if, conditional on the set of covariates $ \mathbf{X} $ (which may be empty),
    \begin{description}[labelindent=0.5cm]
    	\item[E1:] $ Z $ is associated with $ T $, $ Cov(Z,T|\mathbf{X})\neq 0 $, 
    	\item[E2:] $ Z $ is not associated with the structural error term, $ \omega_Y $, on $ Y $, $ Cov(Z,\omega_Y |\mathbf{X})=0 $.
    \end{description}
\end{definition}
Assumption E1 is called \emph{relevance}, and assumption E2 is called \emph{exclusion}. 
\citet{Pearl2001} provides a graphical definition.
\begin{definition}
    A variable, $ Z $, is called an instrumental variable for the causal effect of $ T $ on $ Y $, $ \beta $, if, conditional on the set of covariates $ \mathbf{X} $ (which may be empty),
    \begin{description}[labelindent=0.5cm]
   	\item[G1:] There is at least one open path from $ Z $ to $ T $ conditional on $ \mathbf{X} $,
   	\item[G2:] $ \mathbf{X} $ does not contain descendants of $ Y $, $ \mathbf{X}\cap desc(Y) = \emptyset$,
    	\item[G3:] There is no open path from $ Z $ to $ Y $ conditional on $ \mathbf{X} $, other than those paths that terminate in a causal path from $ T $ to $ Y $. 
    \end{description}
\end{definition}

Assumption G1 defines \emph{relevance}, and assumptions G2 and G3 together define \emph{exclusion}. 
We say that a candidate instrumental variable is ``valid" if it is relevant and excluded, and ``invalid" otherwise. 
For example, in Figure \ref{fig:IVDAGBaseline}, $ Z $ is a valid instrument without conditioning on $ S $, since $ Z $ is relevant (associated with $ T $) by the open path $ Z\rightarrow T $, and $ Z $ is excluded (unassociated with the structural error term on $ Y $) since the only open path from $ Z $ to $ Y $, $ Z\rightarrow T\rightarrow Y $, terminates in the causal effect of $ T $ on $ Y $. 
When $ Z $ is a valid instrumental variable, then the standard IV estimator, given by the sample analog of
\[ \beta_{IV} = \frac{Cov(Y,Z|\mathbf{X})}{Cov(T,Z|\mathbf{X})} , \]
is consistent for the causal effect of $ T $ on $ Y $ in linear and homogeneous models. 
The numerator of this estimator is called the \emph{reduced form} and the denominator is called the \emph{first stage.} 
The behavior of this IV estimator is the focus of this paper. 
For simplicity, we will henceforth write $ \beta_{IV} $ and $ \beta_{OLS} $ to refer to the probability limits (as the sample size tends to infinity) of the standard IV and OLS estimators, respectively.

\section{Selection Bias in IV: Qualitative Analysis}\label{sec:qualitative}
We say that the IV estimator suffers selection bias when conditioning on some variable violates the exclusion assumption.
For example, conditioning on a variable that opens a path between $Z$ and $Y$ that does not terminate in the causal effect of $T$ on $Y$ violates exclusion both in the sense of G3 and E2. 
\citet{HughesEtAl2019} catalogue several models in which selection violates exclusion. 

We focus on the IV selection bias that results from conditioning on a descendant of $ T $, $ S \in desc(T) $. 
For example, in Figure \ref{fig:IVDAGBaseline}, conditioning on $ S $ invalidates the use of $ Z $ as an instrumental variable, because $ T $ is the only collider variable on the path $ Z\rightarrow T\leftarrow U\rightarrow Y $, and conditioning on $S$ as the descendant of the collider $T$ opens this path. 
The association ``transmitted" by this open path overtly violates the exclusion condition G3 and similarly violates the exclusion condition E2, since $ \omega_Y $ is a function of $ U $. This rationalizes why Pearl's [2000:248] early graphical IV definition rules out conditioning on descendants of treatment outright.
 
Since conditioning on a variable can result from many different procedures during data collection or data analysis, selection bias in IV analysis can result from many different procedures as well. 
Analysts should be aware, however, that different ways of conditioning on a variable may induce quantitatively different selection biases.  
In this paper, we contrast selection bias resulting from two empirically common conditioning procedures: sample truncation and covariate adjustment.
 
\emph{Truncation} occurs when observations are preferentially excluded from the sample (\citealt{Bareinboim2014}), e.g. due to attrition or listwise deletion of missing data. 
Write $ R=1 $ for retained observations, and $ R=0 $ for excluded (truncated) observations. 
Let $ S $ be the (possibly latent) continuous variable that determines truncation.
We distinguish between interval truncation and point truncation. 
\emph{Interval truncation} restricts the sample to observations with a range of values of $ S $, for example, $ R=\mathbf{1}(S\geq s_0) $ or $ R=\mathbf{1}(s_1\geq S\geq s_0) $, where $ \mathbf{1}(\cdot ) $ is the indicator function. A limiting case of interval truncation is \emph{point truncation}, where the sample is restricted to units with a single value of $ S $, $ R=\mathbf{1}(S=s_0) $. 
The truncated IV estimator is given by
\[ \beta_{IV|Tr}=\frac{Cov(Z,Y|R=1)}{Cov(Z,T|R=1)}. \]
With truncation (as opposed to censoring) the analyst does not have access to the truncated observations, cannot estimate the probability of truncation, and hence cannot use inverse-probability weights to correct for truncation (\citealt{CananEtAl2017}; \citealt{GkatzionisBurgess2018}). 
In Figure \ref{fig:IVDAGBaseline}, a truncated sample would involve the empiricist observing $ \{Z,T,Y\} $ only for units with $ R=1 $.

Although selection can also occur due to \emph{covariate adjustment} for S, this procedure has received less attention in the literature on IV selection bias. 
With covariate adjustment the analyst observes $ \{Z, T, S, Y\} $ for all units. 
Adjustment involves first exactly stratifying on $ S $, computing the estimator within each stratum, and then averaging across the marginal distribution of $ S $.
Thus the IV estimator under adjustment on $ S $ is given by
\[ \beta_{IV|Adj}=\int \frac{Cov(Z,Y|S=s)}{Cov(Z,T|S=s)}f_S(s)ds, \]
where $f_S(s)$ is the marginal distribution of $ S $.
In linear models, controlling for a variable as a main effect in OLS or 2SLS amounts to covariate adjustment on the variable (\citealt{AngristPischke2008}).

Next, we analytically characterize selection bias in IV analysis and OLS regression for various data-generating models and provide intuition.

\section{Selection Bias in IV: Quantitative Analysis}\label{sec:quantitative}

This section derives exact analytic expressions for  selection  bias  across  a  range  of  common  data-generating  models. For each model, we contrast the selection bias for  the IV and the OLS estimators, resulting from two different conditioning strategies. First, we present the selection bias resulting from covariate adjustment on $ S $. 
Next, we newly derive the selection bias from interval truncation on $ S $, $ R=\mathbf{1}(S\geq s_0 ) $. 
We assume a probit link between $ S $ and the binary selection indicator, $ R $.\footnote{Numerical simulations in prior work have assumed logit selection (\citealt{CananEtAl2017}; \citealt{HughesEtAl2019};  \citealt{GkatzionisBurgess2018}). Switching to probit selection captures the same intuition, but gains analytic tractability.}
Since IV analysis suffers small-sample bias regardless of selection, we study its large-sample behavior (asymptotic bias). 

\subsection{Selection as a Function of Treatment Alone}
Consider the most basic scenario of IV selection bias in Figure \ref{fig:IVDAGBaseline}. 
As stated above, $ Z $ in this model is a valid instrumental variable for the causal effect of $ T $ on $ Y $, $ \beta $ , if the analysis does not condition on $ S $. 
Conditioning on $ S $, however, invalidates $ Z $ as an instrumental variable, because $ S $ is a descendant of $ T $, and $ T $ is a collider on the path $ Z\rightarrow T\leftarrow U\rightarrow Y $. 
Conditioning on $ S $ opens this path, which induces an association between $ Z $ and $ Y $ via $ U $ and hence violates the exclusion condition.
 
Proposition \ref{prop:IVAdjBias} gives the selection bias in the standard IV estimator when the analysis adjusts for S. 

\begin{proposition}\label{prop:IVAdjBias}
	In a linear and homogeneous model with normal errors represented by Figure \ref{fig:IVDAGBaseline} and covariate adjustment on $ S $, the standard instrumental variables estimator converges in probability to 
	\[ 	
	\beta_{IV|Adj}=\beta -\delta_1 \delta_2  \frac{\gamma^2}{1-\gamma ^2}.
	\]
\end{proposition}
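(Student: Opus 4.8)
The plan is to exploit a standard fact about linear Gaussian models: the conditional covariance $Cov(A,B\mid S=s)$ does not depend on the value $s$ being conditioned on (the conditional distribution is homoskedastic). Consequently, the stratum-specific ratio inside the integral defining $\beta_{IV\mid Adj}$ is the same in every stratum, and the outer integral against $f_S$ collapses to the single ratio $Cov(Z,Y\mid S)/Cov(Z,T\mid S)$. Since all variables are standardized, I would then invoke the conditional-covariance identity from Section~\ref{sec:causal graphs}, $Cov(A,B\mid S)=Cov(A,B)-Cov(A,S)\,Cov(B,S)$, which reduces the entire problem to computing five marginal covariances: $Cov(Z,T)$, $Cov(Z,Y)$, $Cov(Z,S)$, $Cov(T,S)$, and $Cov(Y,S)$.

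Next, I would read these off Figure~\ref{fig:IVDAGBaseline} using Wright's rule. The only open $Z$--$T$ path is $Z\to T$, so $Cov(Z,T)=\pi$; the only open $Z$--$Y$ path is $Z\to T\to Y$ (the path through $U$ is blocked by the unconditioned collider $T$), so $Cov(Z,Y)=\pi\beta$; likewise $Cov(Z,S)=\pi\gamma$ via $Z\to T\to S$. On the $S$ side, $Cov(T,S)=\gamma$, and $Cov(Y,S)=\gamma\,Cov(T,Y)=\gamma(\beta+\delta_1\delta_2)$ since the open $Y$--$S$ paths are $Y\leftarrow T\to S$ and $Y\leftarrow U\to T\to S$. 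Alternatively one can get the last two directly from $S=\gamma T+\varepsilon_S$ and the fact that $\varepsilon_S$ is independent of everything else.

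Plugging in, $Cov(Z,T\mid S)=\pi-(\pi\gamma)\gamma=\pi(1-\gamma^2)$ and $Cov(Z,Y\mid S)=\pi\beta-(\pi\gamma)\gamma(\beta+\delta_1\delta_2)=\pi\bigl[\beta(1-\gamma^2)-\gamma^2\delta_1\delta_2\bigr]$. Taking the ratio and cancelling the common factor $\pi(1-\gamma^2)$ — nonzero by the relevance assumption E1/G1 together with $|\gamma|<1$ — gives $\beta_{IV\mid Adj}=\beta-\delta_1\delta_2\,\gamma^2/(1-\gamma^2)$, as claimed. Finally, I would check that a consistency argument justifies replacing the sample IV estimator by this probability limit, which is routine given that sample covariances converge to their population counterparts and the denominator limit is nonzero.

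The step I expect to be the main (conceptual) obstacle is the very first one: establishing cleanly that the stratum-specific ratio is constant across strata, so that the averaging over $f_S$ factors out trivially. This is exactly where joint normality is used — for a non-normal $S$ the conditional covariances, and hence the integrand, could vary with $s$, and the bias would not reduce to a single ratio. Everything after that is elementary path-tracing and algebra; the only care needed there is in enumerating open paths correctly (in particular recognizing that $T$ blocks the $Z$--$U$--$Y$ path under marginalization, so that block only ``activates'' once $S$ enters the conditioning set through the $Cov(\cdot,S)$ terms).
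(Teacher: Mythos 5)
Your proposal is correct and matches the paper's intended argument: the paper gives only the one-line remark that Proposition~\ref{prop:IVAdjBias} ``follows from regression algebra and Wright's rule,'' and your derivation fills in exactly that — Wright's rule for the five marginal covariances, the partial-covariance identity from Section~\ref{sec:causal graphs}, and the Gaussian stratum-constancy argument (the same one the paper uses in Appendix~\ref{sec:TruncAdjProof}) to collapse the integral over $f_S$. All covariances and the final algebra check out, so nothing further is needed.
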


The proof follows from regression algebra and Wright's rule (\citealt{Wright1934}). 
The magnitude of selection bias due to covariate adjustment in the IV estimator depends on two components. 
First, selection bias increases with the strength of unobserved confounding between $ T $ and $ Y $ via $ U $, $ \delta_1 \delta_2 $ (which corresponds to the path $ Z\rightarrow T\leftarrow U\rightarrow Y $ that is opened by conditioning on $ S $, less the first stage $ Z\rightarrow T $). 
Second, selection bias increases with the effect of the treatment $ T $ on the selection variable, $ S $ , $ \gamma $ . 
When $ \gamma =0 $, $ S $ contains no information about the collider $ T $, conditioning on $ S $ does not open the path $ Z\rightarrow T\leftarrow U\rightarrow Y $, and selection bias is zero. 
By contrast, as $ |\gamma |\rightarrow 1 $, the magnitude of the bias increases without bound because adjusting for $ S $ increasingly amounts to adjusting for the collider $ T $ itself, while at the same time reducing the first stage. 
(If the analysis directly adjusted for $ T $, then the first stage would go to zero and the IV estimator would not be defined.)

Proposition \ref{prop:IVTruncBias} derives the IV selection bias due to interval truncation on S.
\begin{proposition}\label{prop:IVTruncBias}
	In a linear and homogeneous model with normal errors represented by Figure \ref{fig:IVDAGBaseline} and truncation on $ S $, $ R=\mathbf{1}(S\geq s_0) $, the standard instrumental variables estimator converges in probability to 
	\[ 
	\beta_{IV|Tr}=\beta -\delta_1 \delta_2  \frac{\psi \gamma ^2}{1-\psi \gamma^2}, \quad \text{where }  \psi =\frac{\phi (s_0)}{1-\Phi(s_0)} \left( \frac{\phi (s_0 )}{1-\Phi(s_0 )} - s_0 \right),
	\]
	and $ \phi (\cdot) $ and $ \Phi(\cdot ) $ are the standard normal pdf and cdf, respectively. 
\end{proposition}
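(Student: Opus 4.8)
The plan is to reduce the whole statement to one general identity about truncating a jointly normal vector, and then turn the crank. Because the model in Figure~\ref{fig:IVDAGBaseline} is linear with jointly normal shocks, the vector $(Z,T,S,Y)$ is jointly normal, and the standardization makes $S\sim N(0,1)$. Since $R=\mathbf{1}(S\ge s_0)$, conditioning on $R=1$ \emph{is} conditioning on the event $\{S\ge s_0\}$, so
\[ \beta_{IV|Tr}=\frac{Cov(Z,Y\mid S\ge s_0)}{Cov(Z,T\mid S\ge s_0)}, \]
and everything comes down to evaluating those two truncated covariances.

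The key step is a truncation identity: for any two variables $A,B$ among $\{Z,T,Y\}$, I would write the population regressions on $S$ as $A=Cov(A,S)\,S+\tilde A$ and $B=Cov(B,S)\,S+\tilde B$ (coefficients equal covariances because all variances are one), where joint normality makes $(\tilde A,\tilde B)$ \emph{independent} of $S$. Then, using that $\tilde A,\tilde B$ are independent of the event $\{S\ge s_0\}$, the cross terms drop and
\[ Cov(A,B\mid S\ge s_0)=Cov(A,S)\,Cov(B,S)\,Var(S\mid S\ge s_0)+Cov(\tilde A,\tilde B). \]
Plugging in the first two moments of the lower-truncated standard normal, $Var(S\mid S\ge s_0)=1-\lambda(\lambda-s_0)=1-\psi$ with $\lambda=\phi(s_0)/(1-\Phi(s_0))$, together with $Cov(\tilde A,\tilde B)=Cov(A,B)-Cov(A,S)Cov(B,S)$, collapses this to the clean ``deflation'' formula $Cov(A,B\mid S\ge s_0)=Cov(A,B)-\psi\,Cov(A,S)\,Cov(B,S)$.

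It then remains to read the marginal covariances off the graph via Wright's rule: $Cov(Z,T)=\pi$, $Cov(Z,Y)=\pi\beta$ (the path through the collider $T$ is closed marginally), $Cov(Z,S)=\pi\gamma$, $Cov(T,S)=\gamma$, and $Cov(Y,S)=\gamma(\beta+\delta_1\delta_2)$ from the two open paths $Y\leftarrow T\to S$ and $Y\leftarrow U\to T\to S$. Applying the deflation formula gives $Cov(Z,T\mid S\ge s_0)=\pi(1-\psi\gamma^2)$ and $Cov(Z,Y\mid S\ge s_0)=\pi\bigl[\beta(1-\psi\gamma^2)-\psi\gamma^2\delta_1\delta_2\bigr]$; the factor $\pi$ cancels in the ratio, leaving $\beta-\delta_1\delta_2\,\psi\gamma^2/(1-\psi\gamma^2)$, as claimed.

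The main obstacle is the truncation identity itself — specifically, justifying that the residuals $\tilde A,\tilde B$ are genuinely independent of $S$ (not merely uncorrelated), so that restricting to $\{S\ge s_0\}$ leaves both their joint law and their independence from $S$ intact; this is exactly where joint normality is indispensable, and it explains why, unlike the adjustment result in Proposition~\ref{prop:IVAdjBias}, the truncation result cannot drop the normal-error assumption. Everything else is the bookkeeping of the truncated-normal moments and elementary algebra. As a sanity check, note that $\psi\in(0,1)$ with $\psi\to 0$ as $s_0\to-\infty$ (no truncation, hence no bias) and $\psi\to 1$ as $s_0\to\infty$ (point truncation), so the formula reduces to $\beta_{IV|Adj}$ in the point-truncation limit, matching the ``limiting case'' claim in the introduction.
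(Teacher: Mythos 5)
Your proof is correct and lands on the same computation as the paper; the only substantive difference is how the key truncated-normal identity is obtained. The paper imports Lemma~\ref{lem:Tallis} (Tallis), whose variance formula $\Sigma-\psi\,\Sigma cc'\Sigma$ (with $\kappa=1$ because $Var(S)=1$) is exactly your deflation rule $Cov(A,B\mid S\ge s_0)=Cov(A,B)-\psi\,Cov(A,S)Cov(B,S)$ read entrywise; you instead derive that rule from scratch by projecting $A$ and $B$ on $S$ and using the fact that, under joint normality, the residuals $(\tilde A,\tilde B)$ are independent of $S$, hence unaffected by restricting to $\{S\ge s_0\}$. That self-contained argument is a nice gain: it makes transparent precisely where normality is indispensable (independence, not mere uncorrelatedness, of the residuals from the truncation event), which the paper's citation-based route leaves implicit. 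The remaining difference is bookkeeping only: you read the marginal covariances off the graph by Wright's rule, where the paper computes $\Gamma\Sigma_\varepsilon\Gamma'$ explicitly; both yield $Cov(Z,T\mid S\ge s_0)=\pi(1-\psi\gamma^2)$ and $Cov(Z,Y\mid S\ge s_0)=\pi\bigl[\beta(1-\psi\gamma^2)-\psi\gamma^2\delta_1\delta_2\bigr]$, and the ratio gives the stated formula. Your limiting checks ($\psi\to 0$ as $s_0\to-\infty$, $\psi\to 1$ as $s_0\to\infty$) are also consistent with Proposition~\ref{prop:AdjIsPtTrunc}.
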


Proposition \ref{prop:IVTruncBias} (proved in Appendix \ref{sec:TruncBiasProof}) illustrates that IV selection bias due to truncation (Proposition \ref{prop:IVTruncBias}) differs from IV selection bias due to adjustment (Proposition \ref{prop:IVAdjBias}) only in that truncation deflates the contribution of the effect of $ T $ on $ S $, $ \gamma $, by the factor $ \psi \in (0,1) $.
Since $ \psi\ $  is the derivative of the standard normal hazard function, it monotonically increases with the \emph{severity of truncation}, $Pr(R=0)=\Phi(s_0)$, as shown in Figure \ref{fig:PsiVsSelection}. Hence, interval truncation leads to less IV selection bias than covariate adjustment in Figure \ref{fig:IVDAGBaseline},

\begin{figure}[t!]
	\centering
	\begin{subfigure}{0.48\linewidth}
	    \centering
	    \textbf{\qquad Truncation Severity versus $ \psi $}
		\includegraphics[width=\linewidth]{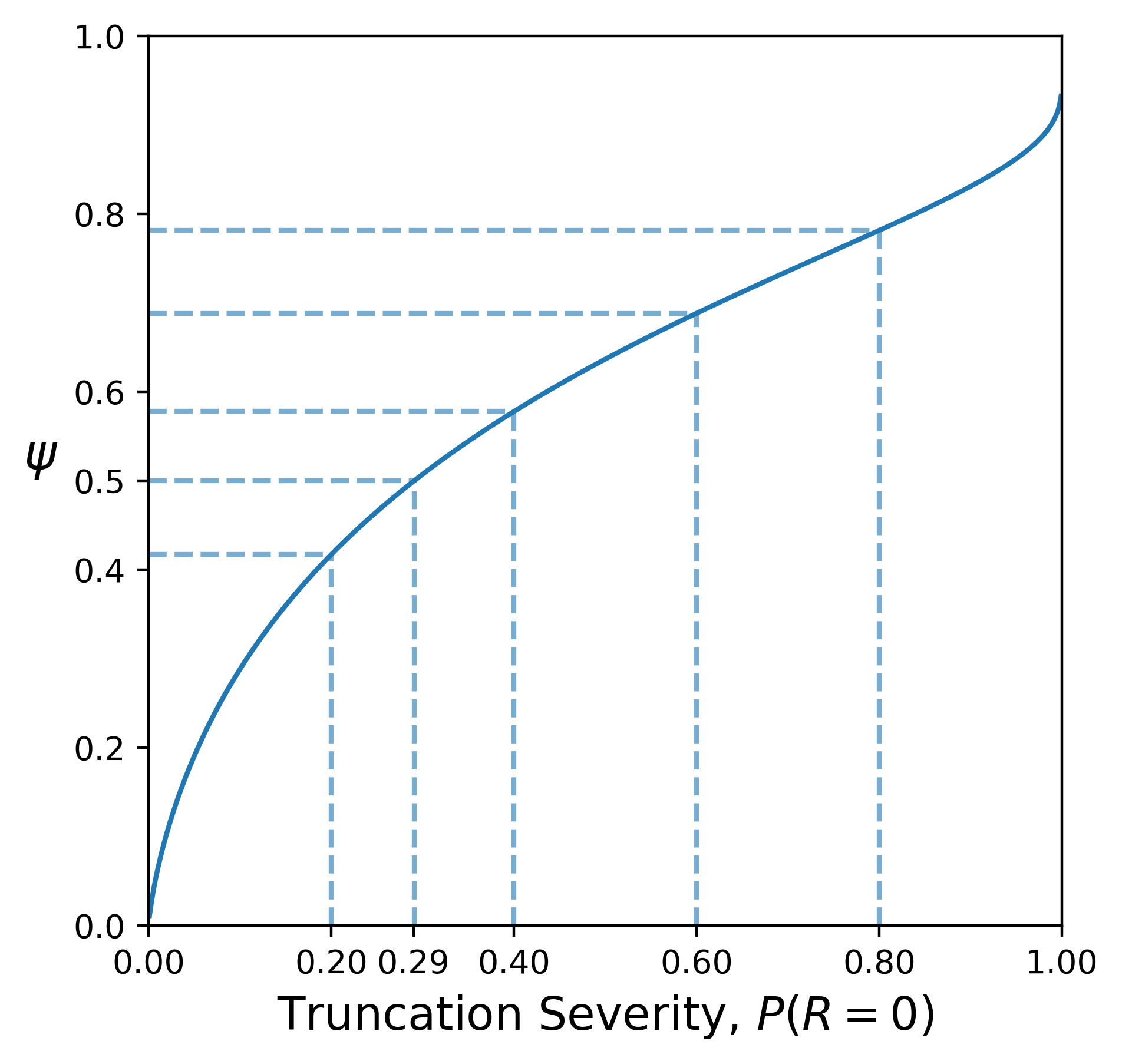}
		\caption{}
		\label{fig:PsiVsSelection}
	\end{subfigure}
	\begin{subfigure}{0.50\linewidth}
	    \centering
	    \textbf{\qquad Least Biased Estimator}\\
		\includegraphics[width=\linewidth]{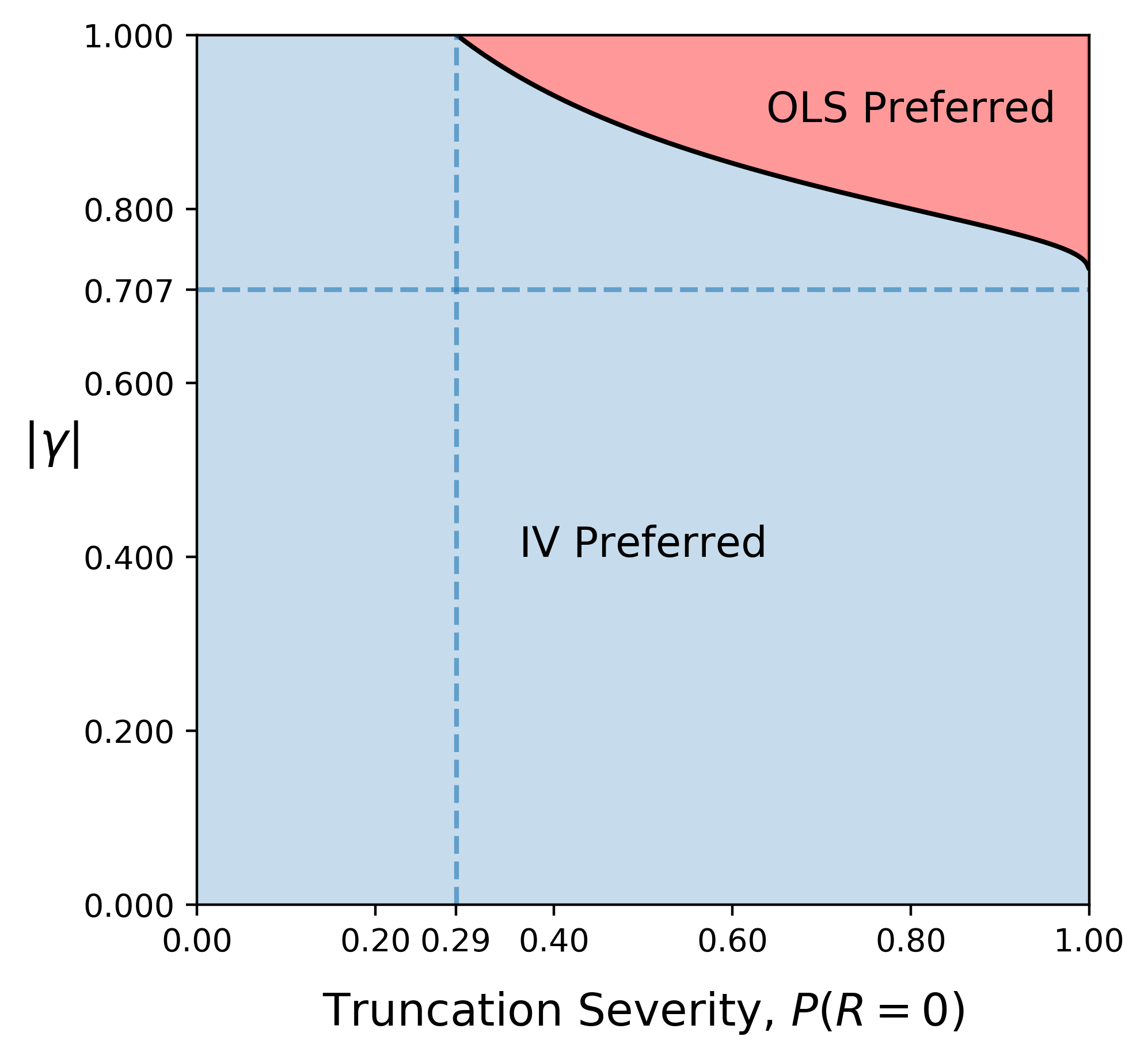}
		\caption{}
		\label{fig:LeastBiasEstPreference}
	\end{subfigure}
	\caption{(\subref{fig:PsiVsSelection}) $\psi$ monotonically increases with truncation severity. (\subref{fig:LeastBiasEstPreference}) Whether OLS or IV is less biased under selection depends on truncation severity and the effect of $ T $ on $ S $, $ |\gamma| $.}
	\label{fig:PsiPlots}
\end{figure}


\begin{corollary}\label{cor:IVTruncIVAdj}
	In a linear and homogeneous model with normal errors represented by Figure \ref{fig:IVDAGBaseline}, the magnitude of IV-adjustment bias is weakly larger than that of IV-truncation bias: 
	$
	\left| \beta_{IV|Adj} - \beta \right| \geq \left| \beta_{IV|Tr} - \beta \right|.
	$
\end{corollary}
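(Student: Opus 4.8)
The plan is to reduce the corollary to a one-line algebraic inequality, since the two closed forms are already in hand. From Propositions \ref{prop:IVAdjBias} and \ref{prop:IVTruncBias},
\[
\left|\beta_{IV|Adj}-\beta\right| = |\delta_1\delta_2|\,\frac{\gamma^2}{1-\gamma^2},
\qquad
\left|\beta_{IV|Tr}-\beta\right| = |\delta_1\delta_2|\,\frac{\psi\gamma^2}{1-\psi\gamma^2},
\]
where $\psi\in(0,1)$ by Proposition \ref{prop:IVTruncBias}. Since path parameters lie in $[-1,1]$, we have $\gamma^2\in[0,1]$; in the non-knife-edge case $|\gamma|<1$ both denominators $1-\gamma^2$ and $1-\psi\gamma^2$ are strictly positive (and at $|\gamma|=1$ the adjustment bias is infinite while the truncation bias stays finite, so the inequality is trivial). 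This is the setup the rest of the argument rests on.

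First I would dispose of the degenerate cases: if $\gamma=0$ or $\delta_1\delta_2=0$, both magnitudes equal zero and the claimed inequality holds with equality. Otherwise I divide both sides by the strictly positive quantity $|\delta_1\delta_2|\,\gamma^2$, so it suffices to prove $\tfrac{1}{1-\gamma^2}\ge\tfrac{\psi}{1-\psi\gamma^2}$. Multiplying through by the positive number $(1-\gamma^2)(1-\psi\gamma^2)$ — which preserves the inequality — yields the equivalent statement $1-\psi\gamma^2\ge\psi-\psi\gamma^2$, i.e.\ $1\ge\psi$. This holds since $\psi\in(0,1)$, and in fact strictly, so $\left|\beta_{IV|Adj}-\beta\right|>\left|\beta_{IV|Tr}-\beta\right|$ whenever $\gamma\ne 0$ and $\delta_1\delta_2\ne 0$.

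There is no genuinely hard step here; the only points requiring care are the bookkeeping ones. One must confirm the positivity of both denominators before cross-multiplying (otherwise the direction of the inequality could flip), and one must be entitled to invoke $\psi\in(0,1)$. The latter is already asserted in Proposition \ref{prop:IVTruncBias}; if a self-contained argument is wanted, it follows from the standard Mills-ratio facts that the normal hazard function $h(s_0)=\phi(s_0)/(1-\Phi(s_0))$ is positive, satisfies $h(s_0)>s_0$, and has slope $\psi=h(s_0)\bigl(h(s_0)-s_0\bigr)$ strictly between $0$ and $1$. A short remark handling the boundary $|\gamma|=1$ completes the proof.
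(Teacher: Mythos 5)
Your proposal is correct and matches the paper's (implicit) argument: the corollary is treated as an immediate consequence of Propositions \ref{prop:IVAdjBias} and \ref{prop:IVTruncBias} together with $\psi\in(0,1)$, which is exactly the reduction you carry out, amounting to the monotonicity of $x\mapsto x/(1-x)$ on $[0,1)$ under the substitution $\gamma^2\mapsto\psi\gamma^2$. Your explicit handling of the degenerate cases ($\gamma=0$, $\delta_1\delta_2=0$, $|\gamma|=1$) and the positivity check before cross-multiplying are sensible additions but do not change the route.
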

Corollary \ref{cor:IVTruncIVAdj} makes intuitive sense. 
Adjustment involves first exactly stratifying and then averaging across strata defined by $ S=s $. 
Exact stratification on $ S $ uses all information about $ T $ that is contained in $ S $, hence opening the biasing path as much as conditioning on $ S $ possibly can.  
By contrast, interval truncation amounts to imprecise stratification on $ S $ (retaining observations across a range of values on $ S $, but not exactly stratifying on any particular value), hence ``less opening" the biasing path.

Of some methodological interest, we further note, in Figure \ref{fig:IVDAGBaseline}, that IV selection bias by truncation converges on IV selection bias by covariate adjustment as the severity of truncation increases to shrink the remaining sample to a single point. 
Proposition \ref{prop:AdjIsPtTrunc} states that this observation is true for all models, not only for Figure \ref{fig:IVDAGBaseline}.
\begin{proposition}\label{prop:AdjIsPtTrunc}
	In a linear and homogeneous model with normal errors, selection bias in the standard instrumental variables estimator due to covariate adjustment is the limiting case of selection bias due to point truncation, 
	\[ 
	\lim_{s_0\rightarrow \infty}\beta_{IV|Tr} = \beta_{IV|Adj}.   
	\]
\end{proposition}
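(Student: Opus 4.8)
The plan is to reduce both $\beta_{IV|Adj}$ and $\beta_{IV|Tr}$ to explicit ratios of marginal covariances and then let $s_0\to\infty$. The common engine is a model-free covariance identity for standardized Gaussian vectors. If $(A,B,S)$ are jointly normal with unit variances, write the population regressions $A = Cov(A,S)\,S + \tilde A$ and $B = Cov(B,S)\,S + \tilde B$; by joint normality the residual pair $(\tilde A,\tilde B)$ is \emph{independent} of $S$ (not merely uncorrelated). Conditioning on any event $\{S\in I\}$ therefore leaves the joint law of $(\tilde A,\tilde B)$ unchanged, and expanding $Cov(A,B\mid S\in I)$ in terms of the regression gives
\[
Cov(A,B\mid S\in I) = Cov(A,B) - Cov(A,S)\,Cov(B,S)\,\bigl(1 - Var(S\mid S\in I)\bigr).
\]
Taking $I=[s_0,\infty)$, the truncated-normal identity $Var(S\mid S\geq s_0)=1-\psi$ shows the bracket equals $\psi$, which reproduces the truncated-covariance rule used to prove Proposition \ref{prop:IVTruncBias}; taking $I=\{s\}$ gives $Var(S\mid S=s)=0$, hence $Cov(A,B\mid S=s)=Cov(A,B)-Cov(A,S)\,Cov(B,S)$, \emph{independent of} $s$.

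Because the within-stratum covariances do not depend on $s$, the integrand defining $\beta_{IV|Adj}$ is constant and the integral collapses, so in any model of the stated class
\[
\beta_{IV|Adj}=\frac{Cov(Z,Y)-Cov(Z,S)\,Cov(Y,S)}{Cov(Z,T)-Cov(Z,S)\,Cov(T,S)},\qquad \beta_{IV|Tr}=\frac{Cov(Z,Y)-\psi\,Cov(Z,S)\,Cov(Y,S)}{Cov(Z,T)-\psi\,Cov(Z,S)\,Cov(T,S)}.
\]
(The individual covariances are whatever Wright's rule assigns them in the model at hand; their values are immaterial here.) It then remains only to check $\psi\to1$ as $s_0\to\infty$: writing $\lambda(s_0)=\phi(s_0)/(1-\Phi(s_0))$ for the inverse Mills ratio, the standard tail asymptotics give $\lambda(s_0)\to\infty$ while $\lambda(s_0)-s_0\to0$, so $\psi=\lambda(s_0)\bigl(\lambda(s_0)-s_0\bigr)\to1$ (equivalently, the retained mass concentrates near $s_0$ and $Var(S\mid S\geq s_0)\to0$).

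Passing to the limit in the expression for $\beta_{IV|Tr}$ then yields $\lim_{s_0\to\infty}\beta_{IV|Tr}=\beta_{IV|Adj}$, the only caveat being that the limiting denominator $Cov(Z,T)-Cov(Z,S)\,Cov(T,S)$ must be nonzero --- but that is exactly the condition that $Z$ retain relevance after adjusting on $S$, i.e.\ that $\beta_{IV|Adj}$ be well-defined, so it is not an extra hypothesis. I expect the one genuinely load-bearing step to be the first: establishing the Gaussian covariance identity, and specifically the claim that conditioning on $S\in I$ does not disturb the law of the residuals $(\tilde A,\tilde B)$ --- this is where joint normality is indispensable, and it is the step that would fail for non-normal errors. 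Everything downstream is bookkeeping together with the routine Mills-ratio limit.
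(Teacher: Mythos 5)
Your proposal is correct, and it proves somewhat more than the paper's own argument does. The paper's Appendix proof works entirely on the adjustment side: it defines the stratum-specific estimator $\beta_{IV|S}(s)$, invokes the fact that Gaussian conditional covariances do not depend on the conditioning value to conclude that all strata give the same ratio, identifies that common ratio with ``point truncation,'' and treats point truncation as the limit of interval truncation essentially by assertion (``the limit of the interval truncated estimator as the interval collapses to a point''). You share the same load-bearing Gaussian fact --- that the residuals of the regression on $S$ are independent of $S$, so within-stratum covariances are constant and the adjustment integral collapses --- but you additionally derive the closed form $Cov(A,B\mid S\in I)=Cov(A,B)-Cov(A,S)\,Cov(B,S)\bigl(1-Var(S\mid S\in I)\bigr)$, which simultaneously recovers the paper's Tallis-lemma truncation formula (the bracket being $\psi$ for $I=[s_0,\infty)$) and the adjustment formula (the bracket being $1$ for $I=\{s\}$), and then you actually carry out the limit $\psi=\lambda(s_0)(\lambda(s_0)-s_0)\to 1$ via the Mills-ratio asymptotics. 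That last step is exactly what the displayed statement $\lim_{s_0\to\infty}\beta_{IV|Tr}=\beta_{IV|Adj}$ requires and what the paper leaves implicit, so your version is the more complete proof of the proposition as literally stated; your observation that the nonvanishing of the limiting denominator is just the well-definedness of $\beta_{IV|Adj}$ (post-adjustment relevance) rather than an extra hypothesis is also correct and worth keeping.
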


This proposition makes intuitive sense. 
Covariate adjustment involves exact stratification on $ S=s $, which defines point truncation. 
Since the probability limits of all $s$-stratum specific estimators are identical in linear Gaussian models, selection bias by adjustment equals selection bias by point truncation.  
The proof in Appendix \ref{sec:TruncAdjProof} formalizes this intuition. 

Proposition \ref{prop:IVTruncBias} helps inform empirical choices in practice. When selection is unavoidable (e.g. because the data were truncated during data collection), should analysts choose IV or OLS? 
Figure \ref{fig:LeastBiasEstPreference} shows that the IV estimator is preferred to OLS, with respect to bias, for most combinations of $ \gamma  $ and truncation severity.
Since OLS bias (with or without truncation) only depends on unobserved confounding, i.e. $ \beta _{OLS|Tr} - \beta=\delta_1 \delta_2 $, the difference in magnitude between the OLS and IV biases with truncation is given by 
\[ 
\left|\beta_{OLS|Tr}-\beta\right| - \left|\beta_{IV|Tr} - \beta \right|=\left|\delta_1 \delta_2 \right|  \frac{1-2\psi \gamma^2}{1-\psi \gamma^2}.
\]
Hence, the IV estimator is preferred when $ \psi \gamma^2 \leq \frac{1}{2} $.
Specifically, when fewer than 29.1\% of observations are truncated (corresponding to $ \psi \leq 0.5 $), IV is preferred regardless of the effect of $ T $ on $ S $, $ \gamma $. 
Conversely, when $ |\gamma |<\sqrt{0.5}\approx 0.707 $, no amount of truncation makes OLS preferable over IV. 
Recalling that $ \gamma $ cannot exceed 1 in magnitude, the selection variable $ S $ would have to be an extraordinarily strong proxy for $ T $ to make IV more biased than OLS at any level of truncation.

Perhaps most useful for practice, we note that selection bias (by truncation or adjustment) in Figure \ref{fig:IVDAGBaseline} is proportional to the negative of OLS confounding bias. 
Therefore, the OLS and IV estimators under selection bound the true causal effect.

\begin{corollary}
	In a linear and homogeneous model with normal errors represented by Figure \ref{fig:IVDAGBaseline}, the OLS estimator and the instrumental variables estimator with selection bound the causal effect of $ T $ on $ Y $, $ \beta $ ,
	\begin{align*}
        \beta_{IV|Tr}\leq & \beta \leq \beta_{OLS}, \quad \text{when } \quad \delta_1 \delta_2>0, \\
	    \beta_{IV|Tr}\geq & \beta \geq \beta_{OLS}, \quad \text{when } \quad \delta_1 \delta_2<0. 	    
	\end{align*} 
\end{corollary}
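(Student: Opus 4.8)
The plan is to read the bound directly off the bias formulas already in hand. First I would record the OLS bias in Figure~\ref{fig:IVDAGBaseline}: by Wright's rule the only open paths between $T$ and $Y$ are the causal path $T\to Y$ (contributing $\beta$) and the back-door path $T\leftarrow U\to Y$ (contributing $\delta_1\delta_2$), so $\beta_{OLS}=Cov(Y,T)=\beta+\delta_1\delta_2$, i.e.\ $\beta_{OLS}-\beta=\delta_1\delta_2$; the same value holds under truncation, $\beta_{OLS|Tr}-\beta=\delta_1\delta_2$, as noted in the text, so it does not matter which OLS quantity appears in the statement. Next I would invoke Proposition~\ref{prop:IVTruncBias} to write $\beta_{IV|Tr}-\beta=-\,\delta_1\delta_2\,\dfrac{\psi\gamma^2}{1-\psi\gamma^2}$.

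The only point requiring an argument is that the scalar $\dfrac{\psi\gamma^2}{1-\psi\gamma^2}$ is nonnegative. Since $\psi\in(0,1)$ (established in Proposition~\ref{prop:IVTruncBias}) and the path parameter satisfies $\gamma\in[-1,1]$, we have $\psi\gamma^2\in[0,1)$; hence the numerator is $\ge 0$ and the denominator $1-\psi\gamma^2>0$, so the fraction is well defined and nonnegative. (This is also where the truncation case is better behaved than adjustment: $1-\psi\gamma^2>0$ even when $\gamma^2=1$, whereas the adjustment denominator $1-\gamma^2$ can vanish.) Therefore $\beta_{IV|Tr}-\beta$ has sign opposite to $\delta_1\delta_2$, or equals $0$ when $\gamma=0$, while $\beta_{OLS}-\beta=\delta_1\delta_2$.

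Finally I would split on the sign of $\delta_1\delta_2$. If $\delta_1\delta_2>0$, then $\beta_{OLS}=\beta+\delta_1\delta_2>\beta$ and $\beta_{IV|Tr}=\beta-\delta_1\delta_2\cdot\text{(nonneg)}\le\beta$, which gives $\beta_{IV|Tr}\le\beta\le\beta_{OLS}$; if $\delta_1\delta_2<0$ both inequalities flip, giving $\beta_{IV|Tr}\ge\beta\ge\beta_{OLS}$. There is no real obstacle: the corollary is an immediate consequence of Proposition~\ref{prop:IVTruncBias} plus the elementary computation of $\beta_{OLS}$, and the only thing to be careful about is the positivity of $1-\psi\gamma^2$, which is exactly what guarantees that the IV selection bias points in the direction opposite to the OLS confounding bias and hence that $\beta$ is sandwiched between the two estimators.
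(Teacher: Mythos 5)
Your proposal is correct and follows the same route the paper uses: it reads the OLS bias $\beta_{OLS}-\beta=\delta_1\delta_2$ off Wright's rule, takes $\beta_{IV|Tr}-\beta=-\delta_1\delta_2\,\psi\gamma^2/(1-\psi\gamma^2)$ from Proposition~\ref{prop:IVTruncBias}, and observes that the multiplier is nonnegative because $\psi\in(0,1)$ and $|\gamma|\le 1$, so the two biases have opposite signs and sandwich $\beta$. This matches the paper's own justification (that the IV selection bias is proportional to the negative of the OLS confounding bias), so nothing further is needed.
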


The fact that the IV selection bias has the opposite sign of the OLS selection bias in Figure \ref{fig:IVDAGBaseline} is owed to linearity and homogeneity: in linear and homogeneous models, conditioning on a collider or its descendant reverses the sign of the product of the path parameters for the associated path. 
For example if all path parameters along the biasing path $ Z\rightarrow T\leftarrow U\rightarrow Y $ are positive, then conditioning on $ S\in desc(T) $ will induce a negative association along this path. 
Since the IV bias hinges on conditioning on $ S $, the selection bias would be negative. 
By contrast, OLS bias in Figure \ref{fig:IVDAGBaseline} does not hinge on conditioning on $ S $ and instead results from confounding along $ T\leftarrow U\rightarrow Y $. 
Therefore, OLS bias would be positive. 

\subsection{Selection as a Function of a Mediator}\label{sec:medselection}

Next, consider models in which the selection variable, $ S $, is a mediator of the effect of treatment on the outcome, as in the causal graphs in Figures \ref{fig:DAGTandY} and \ref{fig:DAGTandYandV}. 
These situations are worth investigating for two reasons: first, often empiricists are interested in the direct causal effect of $ T $ on $ Y $, which necessitates conditioning on $ S $; 
second, they result in qualitatively different bias representations.

\begin{figure}[b!]
	\centering
	\begin{subfigure}{0.48\linewidth}
		\usetikzlibrary{arrows}
\def\xscale{1.2}
\def\yscale{1.5}
\tikzstyle{DAGarrow} = [-latex]
\begin{tikzpicture} 
	\node (z) at (-2*\xscale,0) {$Z$};
	\node (t) at (0,0) {$T$};
	\node (y) at (2*\xscale,0) {$Y$};
	\node (u) at (1*\xscale,1*\yscale) {$U$};
	\node (s) at (0,-1*\yscale) {$S$};
	
	\draw [DAGarrow] (z) -- node [above] {$\pi$} (t);
	\draw [DAGarrow] (s) -- node [below right] {$\tau$} (y);
	\draw [DAGarrow] (t) -- node [above] {$\beta$} (y);
	\draw [DAGarrow] (u) -- node [above left] {$\delta_1$} (t);
	\draw [DAGarrow] (u) -- node [above right] {$\delta_2$} (y);
	\draw [DAGarrow] (t) -- node [left] {$\gamma$} (s);
\end{tikzpicture}
		\caption{}
		\label{fig:DAGTandY}
	\end{subfigure}
	\begin{subfigure}{0.48\linewidth}
\def\xscale{1.2}
\def\yscale{1.5}
\tikzstyle{DAGarrow} = [-latex]
\begin{tikzpicture} 
	\node (z) at (-2*\xscale,0) {$Z$};
	\node (t) at (0,0) {$T$};
	\node (y) at (2*\xscale,0) {$Y$};
	\node (u) at (1*\xscale,1*\yscale) {$U$};
	\node (s) at (0,-1*\yscale) {$S$};
	\node (w) at (2*\xscale,-1*\yscale) {$W$};
	
	\draw [DAGarrow] (z) -- node [above] {$\pi$} (t);
	\draw [DAGarrow] (s) -- node [below right] {$\tau$} (y);
	\draw [DAGarrow] (t) -- node [above] {$\beta$} (y);
	\draw [DAGarrow] (u) -- node [above left] {$\delta_1$} (t);
	\draw [DAGarrow] (u) -- node [above right] {$\delta_2$} (y);
	\draw [DAGarrow] (t) -- node [left] {$\gamma$} (s);
	\draw [DAGarrow] (w) -- node [below] {$\delta_3$} (s);
	\draw [DAGarrow] (w) -- node [right] {$\delta_4$} (y);
\end{tikzpicture}
		\caption{}
		\label{fig:DAGTandYandV}
	\end{subfigure}
	\caption{IV scenarios where the selection variable is both a descendant of treatment and a mediator.}
	\label{fig:IVDAGMediated}
\end{figure}
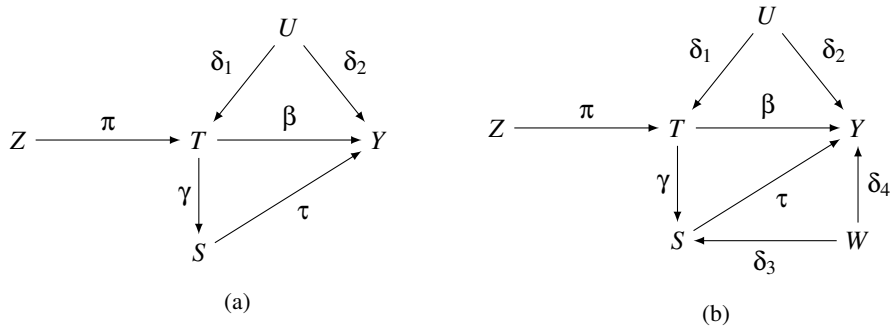

Suppose that the analyst is interested in the direct causal effect of $ T $ on $ Y $, $ \beta $, in the model of Figure \ref{fig:DAGTandY}. 
The bias in the IV and OLS estimators under interval truncation and adjustment for $ S $ is given in Proposition \ref{prop:IVMedBias}. 

\begin{proposition}\label{prop:IVMedBias}
	In a linear and homogeneous model with normal errors represented by Figure \ref{fig:DAGTandY}. The standard instrumental variables estimator with selection on $ S $, converges in probability to 
	\[ 
	\beta_{IV|S}= \beta -\delta_1 \delta_2  \frac{\psi \gamma ^2}{1-\psi \gamma ^2} + \gamma\tau \frac{1-\psi}{1-\psi \gamma ^2},
	\]
	and the OLS estimator with selection on $ S $ converges in probability to 
	\[ 
	\beta_{OLS|S}= \beta +\delta_1 \delta_2+\gamma \tau \frac{1-\psi}{1-\psi \gamma ^2},
	\]
	where 
	\[ 
	\psi =\begin{cases}
				\frac{\phi (s_0)}{1-\Phi(s_0)} \left( \frac{\phi (s_0 )}{1-\Phi(s_0 )} - s_0 \right) & \text{with truncation on }S, R=\mathbf{1}(S\geq s_0) \\
				1 & \text{with adjustment on }S 
		\end{cases}.
	\]
\end{proposition}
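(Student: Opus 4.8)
The plan is to show that both estimators, under either conditioning procedure, reduce to one algebraic template, and then to fill that template with marginal covariances read off Figure \ref{fig:DAGTandY} by Wright's rule. \emph{First, a unified conditioning identity.} The engine is the one already behind Propositions \ref{prop:IVAdjBias} and \ref{prop:IVTruncBias}: since $(Z,U,T,S,Y)$ is jointly normal and $Var(S)=1$, conditioning on $S$ — whether by exact stratification at a value or by interval truncation $R=\mathbf{1}(S\ge s_0)$ — shifts every second moment by the same multiplicative shrinkage of the part of each variable explained by $S$,
\[
Cov(A,B\mid\,\cdot\,) = Cov(A,B) - \psi\, Cov(A,S)\, Cov(B,S),
\]
with $\psi=1$ under adjustment and $\psi = Var(S) - Var(S\mid S\ge s_0) = \tfrac{\phi(s_0)}{1-\Phi(s_0)}\!\left(\tfrac{\phi(s_0)}{1-\Phi(s_0)} - s_0\right)$ under truncation. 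For truncation this is precisely the lemma proved in Appendix \ref{sec:TruncBiasProof} (decompose $A = Cov(A,S)\,S + A^{\perp}$ and $B = Cov(B,S)\,S + B^{\perp}$ with $A^{\perp},B^{\perp}\perp S$; the cross terms involving $S$ vanish because the residuals stay mean zero and independent of $\mathbf{1}(S\ge s_0)$, leaving only the truncated variance of $S$). For adjustment it is the partialling-out formula from Section \ref{sec:causal graphs}, and because every stratum-specific conditional covariance in a Gaussian model is free of $s$, the outer integrals defining $\beta_{IV|Adj}$ and $\beta_{OLS|Adj}$ collapse onto that common value, i.e. the $\psi=1$ case. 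Hence in all cases
\[
\beta_{IV|S} = \frac{Cov(Z,Y) - \psi\, Cov(Z,S)\, Cov(Y,S)}{Cov(Z,T) - \psi\, Cov(Z,S)\, Cov(T,S)}, \qquad
\beta_{OLS|S} = \frac{Cov(T,Y) - \psi\, Cov(T,S)\, Cov(Y,S)}{Var(T) - \psi\, Cov(T,S)^{2}}.
\]

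\emph{Next, the marginal covariances.} Applying Wright's rule to Figure \ref{fig:DAGTandY} with nothing conditioned (so the collider $T$ keeps $Z\rightarrow T\leftarrow U\rightarrow Y$ closed) gives $Cov(Z,T)=\pi$, $Cov(T,S)=\gamma$, $Var(T)=1$, $Cov(Z,S)=\pi\gamma$, $Cov(Z,Y)=\pi\beta+\pi\gamma\tau$, $Cov(T,Y)=\beta+\gamma\tau+\delta_1\delta_2$, and $Cov(Y,S)=\tau+\gamma\beta+\gamma\delta_1\delta_2$ (the open $S$–$Y$ paths being $S\rightarrow Y$, $S\leftarrow T\rightarrow Y$, and $S\leftarrow T\leftarrow U\rightarrow Y$). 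The one point needing care is that $S$, although a descendant of $T$, does not open the path through the collider $T$ when it is \emph{not} conditioned on — which is exactly why $Z$ remains valid absent selection.

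\emph{Finally, substitute and collect.} Inserting these values into the template, the factor $\pi$ cancels in the IV ratio; both denominators become $1-\psi\gamma^{2}$; expanding $-\psi\gamma(\tau+\gamma\beta+\gamma\delta_1\delta_2)$ in the IV numerator and regrouping gives $\beta(1-\psi\gamma^{2}) + \gamma\tau(1-\psi) - \psi\gamma^{2}\delta_1\delta_2$, hence the stated $\beta_{IV|S}$; for OLS the numerator regroups as $\beta(1-\psi\gamma^{2}) + \gamma\tau(1-\psi) + \delta_1\delta_2(1-\psi\gamma^{2})$, using $\delta_1\delta_2 - \psi\gamma^{2}\delta_1\delta_2 = \delta_1\delta_2(1-\psi\gamma^{2})$, so division leaves $\beta + \delta_1\delta_2 + \gamma\tau\tfrac{1-\psi}{1-\psi\gamma^{2}}$. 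Reading off the two cases of $\psi$ then yields the piecewise statement. There is no deep obstacle: the substantive step is Step 1's claim that truncation and adjustment are both captured by one identity with a single scalar $\psi$, and that is inherited from the earlier propositions; the only thing demanding attention is the Step 3 bookkeeping — in particular keeping the common factor $(1-\psi\gamma^{2})$ attached to the $\beta$- and $\delta_1\delta_2$-terms while the $\gamma\tau$ interaction term carries the different factor $(1-\psi)$, so that it survives the division rather than cancelling.
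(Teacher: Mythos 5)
Your proof is correct and follows essentially the same route as the paper: your ``unified conditioning identity'' $Cov(A,B\mid\cdot)=Cov(A,B)-\psi\,Cov(A,S)Cov(B,S)$ is exactly the scalar form of Lemma~\ref{lem:Tallis} (with $\kappa=1$) that the appendix applies to the reduced-form covariance matrix $\Gamma\Sigma_\varepsilon\Gamma'$, combined with the Proposition~\ref{prop:AdjIsPtTrunc} argument for the $\psi=1$ adjustment case. Your Wright's-rule covariances for Figure~\ref{fig:DAGTandY} and the subsequent algebra check out and reproduce both displayed limits.
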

All bias expressions in Proposition \ref{prop:IVMedBias} have a straightforward graphical interpretation. 
With \emph{adjustment} on $ S $, the indirect causal path $ T\rightarrow S\rightarrow Y $ is completely blocked, because $ S $ is a non-collider on this path. 
Hence, the bias in the IV and OLS estimators with adjustment on $ S $ equals the IV and OLS adjustment biases in Figure \ref{fig:IVDAGBaseline}, where $ S $ was not a mediator.
With adjustment on $ S $, IV is biased by selection, whereas OLS is biased by confounding; 
IV selection bias will generally be smaller in magnitude than OLS confounding bias (unless the effect of $ T $ on $ S $ is very large); and IV and OLS with adjustment bound the true direct causal effect. 

With \emph{truncation} on $ S $, however, the indirect path $ T\rightarrow S\rightarrow Y $ is not completely blocked and hence contributes a new term to both IV and OLS bias. For both IV and OLS, this term equals the strength of the partially blocked indirect path, $ \gamma \tau $ , deflated by the multiplier $ 0\leq(1-\psi)/(1-\psi \gamma^2 )\leq 1 $. 
The size of the multiplier depends both on the truncation severity, $ \psi $ , and on the effect of $ T $ on $ S $, $ \gamma $ , but in opposite directions.
As $ \gamma $ is fixed and truncation increases, $ \psi \rightarrow 1 $, the analysis conditions ever more precisely on an ever smaller range of values of $ S $; hence the indirect path is increasingly blocked, and both the multiplier and the bias term tend to 0. 
By contrast, when $ \psi $ is fixed and the effect of $ T $ on $ S $ increases, $ |\gamma| \rightarrow 1 $, the information about $ T $ contained in $ S $ increases, the multiplier tends to 1, and the path is increasingly opened.
 
By Proposition \ref{prop:AdjIsPtTrunc}, it remains true in Figure \ref{fig:DAGTandY} that IV selection bias due to adjustment is the limiting case of IV selection bias due to point truncation. 
However, it is no longer necessarily true that IV with adjustment is more biased than IV with truncation. 
The bias ordering now depends on the signs and relative sizes of the two additive bias term (representing the biasing paths $ T\leftarrow U\rightarrow Y $ and $ T\rightarrow S\rightarrow Y $), and on how well the indirect path $ T\rightarrow S\rightarrow Y $ is closed by truncation. 
Hence, when selection is made on a mediator of the treatment effect, selection bias by adjustment could be larger or smaller in magnitude than selection bias by truncation.
 Bounding the true causal effect also becomes more difficult. 
With truncation on $ S $, IV and OLS with selection do not necessarily bound the true direct causal effect. 

The analysis is further complicated when the effect of $ S $ on $ Y $ is confounded by some unobserved variable, $ W $, as in Figure \ref{fig:DAGTandYandV}. 
This situation is arguably more realistic than the model in Figure \ref{fig:DAGTandY}, because mediators in observational studies are expected to be confounded. 
Here, conditioning on $ S $ (by adjustment or truncation) in IV analysis opens a new path, $ Z\rightarrow T\rightarrow S\leftarrow W\rightarrow Y $, which violates the exclusion assumption; and in OLS it opens $ T\rightarrow S\leftarrow W\rightarrow Y $, which biases OLS regression. 
The resulting bias expressions are the same as those in Proposition \ref{prop:IVMedBias} with an additional bias term, $ -\gamma \delta_3 \delta_4  \frac{\psi}{1-\psi \gamma^2} $. 
Once more, IV selection bias due to adjustment is the limiting case of IV selection bias due to point truncation. 
However, no pair of estimators (among $ \beta_{IV|Tr}, \beta_{IV|Adj},\beta_{OLS|Tr},\beta_{OLS|Adj} $) can be relied on to bound the true direct causal effect in the model of Figure \ref{fig:DAGTandYandV}.

\subsection{Selection on Treatment and the Unobserved Confounder}\label{sec:medandconfselection}

Finally, we consider situations where the selection variable, $ S $, is also a descendant of the unobserved $ U $ that confounds the effect of treatment on the outcome (Figure \ref{fig:DAGTandU}).

\begin{figure}[t!]
	\centering
\def\xscale{1.2}
\def\yscale{1.5}
\tikzstyle{DAGarrow} = [-latex]
\begin{tikzpicture} 
	\node (z) at (-2*\xscale,0) {$Z$};
	\node (t) at (0,0) {$T$};
	\node (y) at (2*\xscale,0) {$Y$};
	\node (u) at (1*\xscale,1*\yscale) {$U$};
	\node (s) at (0,-1*\yscale) {$S$};
	\node (j) at (2.4*\xscale,0) {$ $};

	\draw [DAGarrow] (z) -- node [above] {$\pi$} (t);
	\draw [DAGarrow] (u.east) to [out=0, in=90] node [above] {$\delta_3$} (j) to [out=-90, in=0] (s.east);
	\draw [DAGarrow] (t) -- node [above] {$\beta$} (y);
	\draw [DAGarrow] (u) -- node [above left] {$\delta_1$} (t);
	\draw [DAGarrow] (u) -- node [above right] {$\delta_2$} (y);
	\draw [DAGarrow] (t) -- node [left] {$\gamma$} (s);
\end{tikzpicture}
	\caption{IV scenario where the selection variable is both a descendant of the treatment and the unobserved confounder.}
	\label{fig:DAGTandU}
\end{figure}
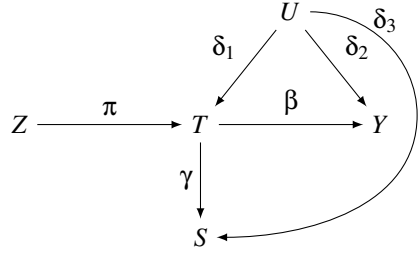


\begin{proposition}\label{prop:IVUnobsBias}
	In a linear and homogeneous model with normal errors represented by Figure \ref{fig:DAGTandU}. The standard instrumental variables estimator with selection on $ S $ converges in probability to
	\[ 	
	\beta_{IV|S} = \beta -\delta_1 \delta_2  \frac{\psi \gamma ^2}{1-\psi \gamma (\gamma +\delta_1 \delta_3 )} - \gamma\delta_3\delta_2\frac{\psi}{1-\psi \gamma (\gamma +\delta_1 \delta_3 )}, 
	\]
	and the OLS estimator with selection on S converges in probability to 
	\[ 
	\beta_{OLS|S} = \beta +\delta_1\delta_2  \frac{1-\psi  (\gamma^2+\gamma \delta_1 \delta_3+\delta_3^2)}{1-\psi\gamma(\gamma+\delta_1 \delta_3 )^2} - \gamma \delta_3 \delta_2\frac{\psi}{1-\psi \gamma (\gamma +\delta_1 \delta_3 )^2},
	\]
	where 
	\[ 
	\psi =\begin{cases}
	\frac{\phi (s_0)}{1-\Phi(s_0)} \left( \frac{\phi (s_0 )}{1-\Phi(s_0 )} - s_0 \right) & \text{with truncation on }S,R=\mathbf{1}(S\geq s_0) \\
	1 & \text{with adjustment on }S 
	\end{cases}.
	\]
\end{proposition}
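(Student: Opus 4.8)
The plan is to express both $\beta_{IV|S}$ and $\beta_{OLS|S}$ as ratios of \emph{marginal} covariances of the model in Figure~\ref{fig:DAGTandU}, using a single conditioning identity that covers covariate adjustment and interval truncation at once. The engine, already used for Propositions~\ref{prop:IVTruncBias} and~\ref{prop:IVMedBias} and derived in Appendix~\ref{sec:TruncBiasProof}, is this: for mean-zero unit-variance jointly normal $A,B$ and the standardized selector $S$, decompose $A = Cov(A,S)\,S + \epsilon_A$ and $B = Cov(B,S)\,S + \epsilon_B$ with $(\epsilon_A,\epsilon_B)$ independent of $S$; conditioning on any event in $\sigma(S)$ leaves the law of $(\epsilon_A,\epsilon_B)$ fixed, so $Cov(A,B\mid\mathrm{sel}) = Cov(A,B) - \psi\,Cov(A,S)\,Cov(B,S)$, where $\psi := 1 - Var(S\mid\mathrm{sel})$. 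For adjustment, exact stratification gives $Var(S\mid S=s) = 0$, hence $\psi = 1$ and the within-stratum covariances are constant in $s$, so the integrals defining $\beta_{IV|Adj}$ and $\beta_{OLS|Adj}$ collapse to ratios of partial covariances (the logic of Proposition~\ref{prop:AdjIsPtTrunc}); for truncation $R = \mathbf{1}(S \ge s_0)$, the truncated-normal variance gives $\psi = \lambda(s_0)(\lambda(s_0) - s_0)$ with $\lambda(s_0) = \phi(s_0)/(1-\Phi(s_0))$. Thus it suffices to evaluate the single pair $\beta_{IV|S} = [\,Cov(Z,Y) - \psi\,Cov(Z,S)Cov(Y,S)\,]/[\,Cov(Z,T) - \psi\,Cov(Z,S)Cov(T,S)\,]$ and $\beta_{OLS|S} = [\,Cov(T,Y) - \psi\,Cov(T,S)Cov(Y,S)\,]/[\,Var(T) - \psi\,Cov(T,S)^2\,]$.

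Next I would read off the marginal covariances from the structural equations $T = \pi Z + \delta_1 U + \varepsilon_T$, $S = \gamma T + \delta_3 U + \varepsilon_S$, $Y = \beta T + \delta_2 U + \varepsilon_Y$ by Wright's rule: $Cov(Z,T) = \pi$, $Cov(Z,Y) = \pi\beta$, $Cov(Z,S) = \pi\gamma$, $Cov(T,Y) = \beta + \delta_1\delta_2$, $Cov(T,S) = \gamma + \delta_1\delta_3$, $Cov(U,S) = \delta_3 + \gamma\delta_1$, and $Var(T) = 1$. The decisive one is $Cov(Y,S) = \beta\,Cov(T,S) + \delta_2\,Cov(U,S)$, which holds because $\varepsilon_Y$ is independent of $S$; keeping it in this factored form (rather than fully expanded) is what makes the $\beta$-terms cancel against the denominators below.

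Then I would assemble the two ratios. For IV, the factor $\pi$ common to $Cov(Z,T) = \pi$ and $Cov(Z,S) = \pi\gamma$ cancels, leaving $\beta_{IV|S} = [\,\beta - \psi\gamma\,Cov(Y,S)\,]/[\,1 - \psi\gamma\,Cov(T,S)\,]$; substituting $Cov(Y,S) = \beta\,Cov(T,S) + \delta_2\,Cov(U,S)$ makes the numerator $\beta[\,1 - \psi\gamma\,Cov(T,S)\,] - \psi\gamma\delta_2\,Cov(U,S)$, so $\beta_{IV|S} = \beta - \psi\gamma\delta_2\,Cov(U,S)/[\,1 - \psi\gamma\,Cov(T,S)\,]$, and since $\gamma\,Cov(U,S) = \gamma^2\delta_1 + \gamma\delta_3$ this splits into the two stated terms with denominator $1 - \psi\gamma(\gamma + \delta_1\delta_3)$. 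For OLS, the same factoring of the numerator gives $Cov(T,Y) - \psi\,Cov(T,S)Cov(Y,S) = \beta[\,1 - \psi\,Cov(T,S)^2\,] + \delta_2[\,\delta_1 - \psi\,Cov(T,S)Cov(U,S)\,]$, hence $\beta_{OLS|S} = \beta + \delta_2[\,\delta_1 - \psi\,Cov(T,S)Cov(U,S)\,]/[\,1 - \psi\,Cov(T,S)^2\,]$; expanding $Cov(T,S)Cov(U,S) = (\gamma + \delta_1\delta_3)(\delta_3 + \gamma\delta_1) = \delta_1(\gamma^2 + \gamma\delta_1\delta_3 + \delta_3^2) + \gamma\delta_3$ and pulling the $\delta_1$-piece through the denominator $1 - \psi\,Cov(T,S)^2 = 1 - \psi(\gamma + \delta_1\delta_3)^2$ isolates the $\delta_1\delta_2$ term, with numerator $1 - \psi(\gamma^2 + \gamma\delta_1\delta_3 + \delta_3^2)$, from the residual $-\gamma\delta_2\delta_3\,\psi$ term, yielding the claimed closed form.

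The computation is essentially mechanical once the factored form of $Cov(Y,S)$ is in hand, so I expect the obstacles to be bookkeeping rather than ideas: (i) justifying the reduction of the adjustment estimators to partial covariances, which is where linearity and joint normality enter (exactly as in Proposition~\ref{prop:AdjIsPtTrunc}); (ii) checking the hypotheses of the conditioning identity, namely that $(Z,U,T,S,Y)$ are jointly normal -- being linear combinations of independent normals -- and that $S \sim N(0,1)$ marginally under the standardization convention, so that the truncated second moment really is $1 - \psi$ with $\psi = \lambda(s_0)(\lambda(s_0) - s_0)$; and (iii) the one genuinely fiddly algebraic step, expanding $(\gamma + \delta_1\delta_3)(\delta_3 + \gamma\delta_1)$ and spotting the grouping $\delta_1(\gamma^2 + \gamma\delta_1\delta_3 + \delta_3^2) + \gamma\delta_3$ that lets the $\delta_1$-part be absorbed into the OLS denominator. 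After that the stated expressions drop out with no further ingenuity.
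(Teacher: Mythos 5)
Your proposal is correct and follows essentially the same route as the paper: the ``conditioning identity'' $Cov(A,B\mid\mathrm{sel})=Cov(A,B)-\psi\,Cov(A,S)Cov(B,S)$ is exactly the $(A,B)$-entry of the Tallis variance formula in Lemma~\ref{lem:Tallis} (with $\kappa=1$ since $Var(S)=1$), the marginal covariances you read off for Figure~\ref{fig:DAGTandU} are right, and the factoring $Cov(Y,S)=\beta\,Cov(T,S)+\delta_2\,Cov(U,S)$ cleanly reproduces the stated IV expression; the adjustment case via $\psi=1$ and Proposition~\ref{prop:AdjIsPtTrunc} is also handled correctly. One caveat: your OLS derivation yields the denominator $1-\psi\,(\gamma+\delta_1\delta_3)^2$, which is \emph{not} the denominator $1-\psi\gamma(\gamma+\delta_1\delta_3)^2$ printed in Proposition~\ref{prop:IVUnobsBias}, yet you assert the ``claimed closed form'' drops out. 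Your version is the correct one --- it is $Var(T)-\psi\,Cov(T,S)^2$ with $Cov(T,S)=\gamma+\delta_1\delta_3$, and it passes the sanity check $\delta_3=0$, where $\beta_{OLS|S}$ must reduce to $\beta+\delta_1\delta_2$ as in Figure~\ref{fig:IVDAGBaseline}, whereas the printed formula gives $\beta+\delta_1\delta_2(1-\psi\gamma^2)/(1-\psi\gamma^3)$. So the proposition as stated contains a typo (a stray $\gamma$ in the OLS denominators); you should have flagged the discrepancy rather than silently absorbing it.
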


Three points stand out about selection bias in Figure \ref{fig:DAGTandU}. 
First, when $ S $ is a descendant of both $ T $ and $ U $, conditioning on $ S $ opens a new path, $ T\rightarrow S\leftarrow U\rightarrow Y $, which biases IV and OLS with adjustment or truncation on $ S $.
 
 Second, in contrast to models considered previously, the bias term associated with each biasing path ($ T\leftarrow U\rightarrow Y $ and $ T\rightarrow S\leftarrow U\rightarrow Y $) is now a function of the path parameters of both paths.  In other words, the path-specific biases interact. Pearl's graphical causal models provide  intuition for this interaction.
Consider, for example, the second bias term. First, conditioning on $ S $ opens the path $ T\rightarrow S\leftarrow U\rightarrow Y $. Hence, the bias term depends on $ \gamma \delta_3 \delta_2 $. Second, conditioning on $ S $ also absorbs variance from $ U $ (a non-collider on $ T\rightarrow S\leftarrow U\rightarrow Y $), because $S$ is a descendant of $U$ along the path $U\rightarrow T\rightarrow S$.  Hence, the bias term also depends on $ \delta_1 $. 

Third, the direction of the interaction, and hence the overall bias, depends on the specific parameter values. This makes the bias order of these estimators fairly unpredictable and prevents generic recommendations for or against any one estimator. This ambiguity provides additional motivation for using exact bias formulas for sensitivity analysis.

\section{Conclusion}\label{sec:conclusion}
Conditioning on the wrong variable can induce selection bias in IV analysis. 
When consistent estimators are not available, analysts should gauge the bias in their estimators by principled speculation or formal sensitivity analysis. 
To enable this work, we have derived analytic expressions for IV selection biases that have previously been characterized only by simulation. 

Our analysis specifically focused on scenarios where selection is a function of a confounded treatment. Judea Pearl's [2000] graphical IV criterion specifically prohibited conditioning on a descendant of treatment. But the practice appears to remain common, calling for formal analysis. 
Our analytic expressions present asymptotic IV selection bias in terms of substantively interpretable standardized path parameters for Gaussian models. 
Empowered by Pearl's graphical causal models, we further provided intuition by decomposing the bias into terms that map onto the paths in the data-generating model that are opened (or closed) by selection. 
Leveraging prior knowledge or principled theory, analysts may use our bias expressions to conduct formal sensitivity analyses by populating the free parameters to derive the size of the bias. Even with partial information our expressions may provide informative bounds on the bias.

We present three broad conclusions. 
First, in the models we investigated, IV selection bias depends on three ingredients: (i) the strength of each biasing path in the model, (ii) the effect of treatment on the selection variable, $ |\gamma| $; and (iii) truncation severity, $\psi$, i.e. the share of the full sample excluded from the analysis by truncation.  
The magnitude of the bias term associated with each biasing path increases with the strength of the path, with $ |\gamma| $, and with truncation severity, $\psi$, if selection is made on a collider or descendant of a collider on the path; and the magnitude of the bias term increases with the strength of the path and with $ |\gamma| $, but decreases with $ \psi $, if selection is made on a non-collider on the biasing path. 

Second, sign and magnitude of IV selection bias depend on the selection procedure: in all linear Gaussian IV models, the bias induced by covariate adjustment is the limiting case of bias induced by point truncation. 
This does not mean that adjustment bias is always larger than truncation bias, only that adjustment bias equals truncation bias if truncation had reduced the sample to a single point. 

Third, rather usefully, in some models (where selection is only a function of treatment and the selection variable is not a mediator), IV and OLS suffer selection biases of opposite signs, such that these estimators bound the true causal effect. 
In the same models, unless the effect of treatment on selection is very large, IV with selection suffers less bias than OLS with or without selection.

\bibliographystyle{mcp-acm}
\bibliography{bibliography}  

\section{Appendix}\label{sec:Appendix}
\subsection{Proof of Truncation Bias Expressions}\label{sec:TruncBiasProof}
We derive the bias under truncation by leveraging a result from  \citet{Tallis1965}.
\begin{lemma} \label{lem:Tallis}
	Let $V\in \mathbb{R}^k$ follow a multivariate normal distribution, $ V \sim N \left( 0, \Sigma \right) $, and define the truncated random vector $ \widetilde{V} = \{ v\in V: c'v \geq p \} $ with $ p\in \mathbb{R} $,$ c\in\mathbb{R}^k $, and $ |c|=1 $.
	Then the expectation and variance of the truncated random vector are given by
	\begin{align*}
	E\left[ \widetilde{V} \right] & = \Sigma c \kappa^{-1} \lambda \left( \frac{p}{\kappa}\right)\\
	Var\left( \widetilde{V} \right) & = \Sigma - \Sigma cc' \Sigma \kappa^{-2}\psi 
	\end{align*}
	where $ \kappa = \left(c'\Sigma c \right)^{-1/2} $, $ \lambda(x) = \frac{\phi(x)}{1-\Phi(x)}$ is the hazard function of the standard normal distribution, and 
	\[
	\psi = \lambda \left( \frac{p}{\kappa}\right)\left( \lambda \left( \frac{p}{\kappa}\right) - \frac{p}{\kappa} \right).
	\]
\end{lemma}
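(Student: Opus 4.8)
The plan is to collapse the $k$-dimensional plane truncation to a one-dimensional truncation by projecting $V$ onto the truncating direction $c$. Set $W = c'V$, so that the truncation event $\{c'V \geq p\}$ is measurable with respect to the scalar $W$ alone; $W$ is mean-zero Gaussian with standard deviation $\kappa = (c'\Sigma c)^{1/2}$. Because $(V,W)$ is jointly Gaussian with $Cov(V,W) = \Sigma c$ and $Var(W) = \kappa^2$, the conditional law of $V$ given $W = w$ is Gaussian with mean $(\Sigma c/\kappa^2)\,w$ and covariance $\Sigma - \Sigma c c'\Sigma/\kappa^2$, the latter free of $w$. Hence every truncated moment of $\widetilde V$ reduces, by iterated expectation, to a truncated moment of the scalar $W$, and those are classical.

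For the mean, I would write $E[\widetilde V] = E\big[\,E[V\mid W]\,\big|\,W\geq p\,\big] = (\Sigma c/\kappa^2)\,E[W\mid W\geq p]$ by the tower property and linearity of the conditional mean. Standardizing $W = \kappa Z$ with $Z\sim N(0,1)$, the truncation becomes $\{Z\geq p/\kappa\}$, and the inverse-Mills-ratio identity $E[Z\mid Z\geq a] = \lambda(a)$ gives $E[W\mid W\geq p] = \kappa\,\lambda(p/\kappa)$. Substituting yields $E[\widetilde V] = \Sigma c\,\kappa^{-1}\lambda(p/\kappa)$, as stated.

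For the covariance, I would apply the law of total variance conditioning on $W$: $Var(\widetilde V) = E\big[\,Var(V\mid W)\,\big|\,W\geq p\,\big] + Var\big(\,E[V\mid W]\,\big|\,W\geq p\,\big)$. The first summand is the constant conditional covariance $\Sigma - \Sigma c c'\Sigma/\kappa^2$. The second summand equals $(\Sigma c c'\Sigma/\kappa^4)\,Var(W\mid W\geq p)$, and the scalar identity $Var(Z\mid Z\geq a) = 1 - \lambda(a)\big(\lambda(a) - a\big) = 1 - \psi$ gives $Var(W\mid W\geq p) = \kappa^2(1-\psi)$. Adding the summands, the two copies of $\Sigma c c'\Sigma/\kappa^2$ combine so that only the $\psi$-weighted part remains, leaving $Var(\widetilde V) = \Sigma - \Sigma c c'\Sigma\,\kappa^{-2}\psi$.

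The genuine content reduces to two one-dimensional facts, $E[Z\mid Z\geq a] = \lambda(a)$ and $E[Z^2\mid Z\geq a] = 1 + a\,\lambda(a)$ (whence $Var(Z\mid Z\geq a) = 1-\psi$); each follows by integrating against the standard normal density and using $\phi'(z) = -z\,\phi(z)$. I do not expect a conceptual obstacle; the only thing to watch is the bookkeeping --- keeping the powers of $\kappa$ straight as they pass between the conditional-mean coefficient, the conditional covariance, and the truncated scalar moments, and verifying the cancellation in the variance step. Since this is precisely the result of \citet{Tallis1965}, one could alternatively cite it directly.
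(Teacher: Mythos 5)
Your proof is correct, and it supplies an argument where the paper gives none: the paper simply cites \citet{Tallis1965} for Lemma \ref{lem:Tallis} and never proves it. Your route --- projecting onto the scalar $W=c'V$, noting that the conditional law of $V$ given $W$ is Gaussian with a covariance free of $w$, and then pushing the truncation through the tower property and the law of total variance --- is the natural self-contained derivation, reducing everything to the two classical truncated-normal moments $E[Z\mid Z\geq a]=\lambda(a)$ and $Var(Z\mid Z\geq a)=1-\psi$; the bookkeeping you worried about does check out (the two copies of $\Sigma cc'\Sigma/\kappa^2$ cancel as you say). One thing your derivation quietly exposes: for the displayed formulas to be correct, $\kappa$ must be the standard deviation of $c'V$, i.e.\ $\kappa=(c'\Sigma c)^{1/2}$, which is the convention you adopt --- the lemma as printed defines $\kappa=(c'\Sigma c)^{-1/2}$, which appears to be a sign error in the exponent (harmless in the paper's application, where $c'\Sigma c=Var(S)=1$ so $\kappa=1$ either way, but worth flagging). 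Since the lemma is an external citation, either your proof or the direct reference to Tallis would be acceptable; your version has the advantage of making the constancy of $Var(V\mid W)$ explicit, which is exactly the property the paper later reuses in the proof of Proposition \ref{prop:AdjIsPtTrunc}.
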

Using properties of the standard normal hazard function it can be shown that $\psi$ is in fact the derivative of the hazard function.

\begin{proof}[Proof of Proposition \ref{prop:IVTruncBias}]
	Consider the model described by Figure \ref{fig:IVDAGBaseline}. 
	Since the idiosyncratic shocks are all normally distributed, all variables in the model are normally distributed.
	Specifically for vectors $ V = \begin{bmatrix} Z & U & T & S & Y \end{bmatrix}' $ and $ \varepsilon = \begin{bmatrix} \varepsilon_Z & \varepsilon_U & \varepsilon_T & \varepsilon_S & \varepsilon_Y \end{bmatrix}' $,
	 the standardized\footnote{Standardization implies non-unit variance for some of the shocks. For example when $ Var(T)=1 $, then $ \varepsilon_{T} $ is $ Var(\varepsilon_T) = 1-\pi^2-\delta_1^2 $.} model has the reduced form $ V = \Gamma \varepsilon $, where $ \varepsilon \sim N(0,\Sigma_{\varepsilon}) $ and
	\[ 
	\Gamma = \begin{bmatrix}
	1 & 0 & 0 & 0 & 0 \\
	0 & 1 & 0 & 0 & 0 \\
	\pi & \delta_{1} & 1 & 0 & 0 \\
	\gamma \pi & \gamma \delta_{1} & \gamma & 1 & 0 \\
	\beta \pi & \beta \delta_{1} + \delta_{2}& \beta & 0 & 1 \\
	\end{bmatrix} \qquad 
	\Sigma_{\varepsilon}= \begin{bmatrix}
	1 & 0 & 0 & 0 & 0 \\
	0 & 1 & 0 & 0 & 0 \\
	0 & 0 & 1-\pi^2-\delta_1^2 &  0 & 0 \\
	0 & 0 & 0 & 1-\gamma^2 &  0 \\
	0 & 0 & 0 & 0 & 1-\beta^2-\delta_2^2-2\beta\delta_1\delta_2 \\
	\end{bmatrix}.
	\]
	
	Since this implies that $ V \sim N\left(0, \Gamma \Sigma_{\varepsilon}\Gamma' \right) $, our truncation scenario, $ R = \mathbf{1}(S\geq s_0) $, allows for direct application of Lemma \ref{lem:Tallis} to derive the covariance matrix of the truncated distribution, $ \widetilde{V} = V|S\geq s_0 $.
	For Lemma \ref{lem:Tallis}, $ c= \begin{bmatrix} 0 & 0 & 0 & 1 & 0 \end{bmatrix}' $, $ p=s_0 $, and $ \Sigma = \Gamma \Sigma_{\varepsilon}\Gamma' $.
	This implies $ \kappa = 1 $ and thus
	\[ 
	Var\left( \widetilde{V} \right) = \Gamma \Sigma_{\varepsilon}\Gamma' - \Gamma \Sigma_{\varepsilon}\Gamma' cc' \Gamma \Sigma_{\varepsilon}\Gamma'\psi 
	\quad \text{where} \quad \psi = \lambda( s_0 )\left( \lambda( s_0 ) - s_0 \right).
	\]
	
	Finally the IV estimand with truncation is given by the ratio of the truncated covariance between instrument and outcome and the truncated covariance between instrument and treatment. After some enjoyable algebra, we evaluate $ Var(\widetilde{V}) $, extract the relevant covariances, and obtain
	\begin{align*}
	\beta_{IV|Tr} & = \frac{Cov(Z,Y|S\geq s_0)}{Cov(Z,T|S\geq s_0)} = \frac{\beta\pi - \psi \gamma\pi\left( \beta\gamma + \gamma\delta_1\delta_2 \right)}{\pi - \psi\gamma^2\pi} = \beta - \delta_1\delta_2\frac{\psi\gamma^2}{1-\psi\gamma^2}.
	\end{align*}
\end{proof}

The proofs of Propositions \ref{prop:IVMedBias} and \ref{prop:IVUnobsBias} proceed analogously, using the appropriate reduced form matrix, $ \Gamma $, for each scenario.

\subsection{Proof of Adjustment as Point Truncation (Proposition \ref{prop:AdjIsPtTrunc})}\label{sec:TruncAdjProof}
\begin{proof}
    Define the stratum specific IV estimator when $S=s$ as 
    \[
    \beta_{IV|S}\left(s\right)=\frac{Cov\left(Z,Y|S=s\right)}{Cov\left(Z,T|S=s\right)}
    \]
    Notice that $\beta_{IV|S}\left(s\right)$ is the IV estimator under point truncation (i.e. the limit of the interval truncated estimator as the interval collapses to a point).
    
    In a homogeneous linear model with normal errors, $ V = \begin{bmatrix} Z & U & T & S & Y \end{bmatrix}' $ will follow a multivariate normal distribution.
    Multivariate normal distributions have the useful property that their conditional distributions have constant covariances across the conditioning level.
    Hence, for all $ V_1, V_2, V_3 \in \{Z, U, T, S, Y\} $ and $ v_0, v_1 \in \mathbb{R} $, we have that
    \[ 
    Cov(V_1, V_2| V_3 = v_0)  = Cov(V_1, V_2| V_3 = v_1). 
    \] 
    It follows that $\beta_{IV|S}\left(s_{0}\right)=\beta_{IV|S}\left(s_{1}\right)$
    for any $s_{0},s_{1}\in\mathbb{R}$.
    Since the stratum specific IV estimator is constant across strata of $ S $, this implies that the IV estimator under adjustment on $ S $ is the same as any stratum specific IV estimator. 
\end{proof}

\end{document}